\DeclareSymbolFont{rsfscript}{OMS}{rsfs}{m}{n}
\DeclareSymbolFontAlphabet{\mathrsfs}{rsfscript}
\renewcommand{\O}{\mathcal{O}}
\title{Checking Whether an Automaton Is Monotonic\\Is NP-complete}
\author{Marek Szyku{\l}a\thanks{Supported in part by Polish NCN grant DEC-2013/09/N/ST6/01194.}}  
\institute{
Institute of Computer Science, University of Wroc{\l}aw,\\
Joliot-Curie 15, PL-50-383 Wroc{\l}aw, Poland\\
\{{\tt msz@cs.uni.wroc.pl}\}}
\begin{document}
\maketitle
\begin{abstract}
An automaton is monotonic if its states can be arranged in a linear order that is preserved by the action of every letter. We prove that the problem of deciding whether a given automaton is monotonic is NP-complete.
The same result is obtained for oriented automata, whose states can be arranged in a cyclic order.
Moreover, both problems remain hard under the restriction to binary input alphabets.
\medskip

\noindent{{\bf Keywords:}
automaton, monotonic, oriented, complexity, NP-complete, linear order, cyclic order, partial order, order-preserving, transition semigroup}
\end{abstract}


\section{Introduction}

We deal with complete deterministic finite (semi)automata $\mathcal{A} = \langle Q,\Sigma,\delta \rangle$, where $Q$ is the set of states, $\Sigma$ is the input alphabet, and $\delta\colon Q \times \Sigma \to Q$ is the transition function defining the action of $\Sigma$ on $Q$. This action naturally extends to the action of $\delta(q,w)$ words for any $q \in Q$, $w\in \Sigma^*$.

Monotonic automata are those that admit a linear order of the states. The same qualification is applied to transformation semigroups. Formally, an automaton $\mathcal{A}$ is \emph{monotonic} if there exists a linear order $\le$ of $Q$ such that if $p \le q$ then $\delta(p,a) \le \delta(q,a)$, for all $p,q \in Q$ and $a \in \Sigma$. We call such an order $\le$ an \emph{underlying linear order} of $\mathcal{A}$. It is clear that if the actions of all letters preserve the order, then also the actions of all words do so.

The class of monotonic automata is a subclass of aperiodic ones \cite{MP1971CounterFreeAutomata}, which recognize precisely \emph{star-free} languages, and form one of the fundamental classes in the theory of formal languages. An automaton is \emph{aperiodic} if no transformation of any word has a nontrivial cycle. Checking whether an automaton is aperiodic is known to be PSPACE-complete \cite{ChoHuynh1991}. On the other hand, checking whether an automaton is \emph{nonpermutational}, where no transformation acts like a permutation of a nontrivial subset of $Q$, can be easily done in $\O(|\Sigma| \times |Q|^2)$ time \cite{IvNG2014}. Such results may be useful in improving algorithms recognizing star-free languages to work better in particular cases.
The complexity problems for various subclasses of regular languages are widely studied (see~\cite{BSX2011DecisionProblemsForConvexLanguages} for regognizing convex, and~\cite{KMM1991APolynomialTimeAlgorithmForLocalTestabilityProblem} for locally testable languages, and~\cite{HolzerKutrib2011Survey} for a survey).
The languages of monotonic automata do not have bounded level in the \emph{dot-depth hierarchy} of star-free languages \cite{BrzozowskiKnast1978TheDotDepthHierarchyIsInfinite}.

Monotonic semigroups were studied by Gomes and Howie \cite{GomesHowie1992} for their maximum size (they use the term \emph{order-preserving}). These semigroups play an important role as building-blocks in the constructions of the largest aperiodic semigroups known so far (\cite{BLL13SyntacticComplexitiesOfSixClasses,BS2014LargeAperiodicSemigroups}).

Monotonic automata have been considered, in particular, in connection with the problems of synchronizing automata. An automaton is said to be \emph{synchronizing} if there is a word $w$ such that $|Qw|=1$; such a word is called a \emph{reset word}. The \v{C}ern\'y conjecture, which is considered one of the most longstanding open problem in automata theory, states that every synchronizing automaton has a reset word of length at most $(|Q|-1)^2$.
Ananichev and Volkov \cite{AV2003SynchronizingMonotonicAutomata} have proved that a synchronizing monotonic automaton has a reset word of length at most $|Q|-1$. They have also proved the same bound for a larger class of \emph{generalized monotonic} automata \cite{AV2005SynchronizingGeneralizedMonotonicAutomata}. Volkov have introduced a still larger class of \emph{weakly monotonic} automata \cite{Volkov2009ChainOfPartialOrders}, which contains all aperiodic ones, and proved that strongly connected automata in this class possess a synchronizing word of length $|Q|(|Q|+1)/6$. Finally, Grech and Kisielewicz have generalized this to the class of automata \emph{respecting intervals of a directed graph}, and they have proved that the \v{C}ern\'y conjecture holds for each automaton in this class, provided it holds for smaller \emph{quotient} automata.
These results could be also useful in computational verification of the conjecture for automata of limited size, provided we could efficiently recognize and skip from computations automata that belong to a class for which the conjecture has been proven \cite{KS2013GeneratingSmallAutomata,KS2014SynchronizingAutomataWithLargeResetLengths}. Therefore it is important to consider computational complexity of the related problems.

The term \emph{monotonic} was also used by Eppstein \cite{Ep1990} for automata whose states can be arranged in a \emph{cyclic order} that is preserved by the actions of the letters. Following \cite{AV2003SynchronizingMonotonicAutomata} we call such automata \emph{oriented} automata. They form a broader class, containing monotonic automata, which has certain applications in robotics (\emph{part-orienters}, see Natarajan \cite{Na1986}). Eppstein has established the tight upper bound for the length of the shortest reset words of an oriented automaton $(|Q|-1)^2$, and provided an algorithm working in $\O(|\Sigma| \times |Q|^2)$ time for finding such a word. However, this algorithm requires the cyclic order to be given.

Note that the problem of finding the shortest reset word is hard in general \cite{OM2010} (also for approximation \cite{Berlinkov2010Approximating,GH2011} and some restricted classes \cite{Ma2009}). But due to possible practical applications, there are many exponential algorithms that can deal with fairly large automata and polynomial heuristics (e.g.~\cite{KRW2012,KKS2015ComputingTheShortestResetWords,RSP1993,Sandberg2005Survey,ST2011}). Also, hardness does not exclude a possibility of using a polynomial algorithm for some easily tractable classes (cf.~slowly synchronizing \cite{KKS2015ComputingTheShortestResetWords}).

Here we prove that the problem of checking whether a given automaton is monotonic is NP-complete, even under restriction to binary alphabets (Section~\ref{sec:monotonic}). We also obtain that checking whether an automaton is oriented is NP-complete under the same conditions (Section~\ref{sec:oriented}).
It follows that, unfortunately, they are hardly recognizable, and it is hard to find a preserved linear (cyclic) order of a monotonic (oriented) automaton. In particular, we cannot efficiently apply the polynomial Eppstein algorithm \cite{Ep1990} to compute a shortest reset word in the cases oriented automata, without knowing a cyclic order. On the other hand, checking whether an automaton admits a nontrivial \emph{partial order} is easy (Section~\ref{sec:discussion}).


\section{Monotonic Automata}\label{sec:monotonic}

The problem $\textsc{MONOTONIC}$ can be formulated as follows: given an automaton $\mathcal{A}$, decide if $\mathcal{A}$ is monotonic. This is the unrestricted version, where the alphabet can be arbitrary large. For a given $k \ge 1$, the restricted problem to $k$-letter alphabets of the input automaton we call $\textsc{MONOTONIC}_k$.

We show that $\textsc{MONOTONIC}$ is NP-complete, as well as $\textsc{MONOTONIC}_k$ for any $k \ge 2$. The problem is easy if the alphabet is unary.

\begin{proposition}
A unary automaton is monotonic if and only if the transformation of the single letter does not contain a cycle of length $\ge 2$. $\textsc{MONOTONIC}_1$ can be solved in $\O(|Q|)$ time, and a monotonic order can be found in $\O(|Q|)$ time if it exists.
\end{proposition}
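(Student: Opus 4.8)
The plan is to prove both directions of the equivalence and then turn the characterization into a linear-time procedure. Write $f = \delta(\cdot,a)$ for the transformation of the single letter $a$, and recall that a unary automaton is monotonic exactly when some linear order $\le$ of $Q$ makes $f$ \emph{non-decreasing}, i.e.\ $p \le q$ implies $f(p) \le f(q)$.

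First I would dispatch the necessity (``only if'') direction by contraposition. Suppose $f$ contains a cycle of length $\ge 2$, i.e.\ a set $C = \{q_1, \ldots, q_k\}$ with $k \ge 2$ on which $f$ acts as the cyclic permutation $q_1 \mapsto q_2 \mapsto \cdots \mapsto q_k \mapsto q_1$. For any linear order $\le$ of $Q$, its restriction to $C$ is again a linear order, and $f|_C$ is a bijection of $C$. If $\le$ were preserved by $f$, then $f|_C$ would be an order-preserving bijection of the finite chain $(C,\le)$, hence the identity, contradicting that $f|_C$ is a nontrivial cycle. Thus no linear order is preserved, so $\mathcal{A}$ is not monotonic.

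For the sufficiency (``if'') direction I would argue by induction on $|Q|$ that a transformation with no cycle of length $\ge 2$ admits a non-decreasing order. The base case $f = \mathrm{id}$ is trivial. Otherwise $f$ is not a permutation (a permutation without nontrivial cycles is the identity), so some state $s$ has no $f$-preimage. Restricting $f$ to $Q \setminus \{s\}$ yields a transformation with no nontrivial cycle, which by induction has a preserved order $\le'$. It then remains to insert $s$ into $\le'$: I would place $s$ so that $s < q$ holds precisely when $f(s) \le' f(q)$, for those $q$ whose image differs from $f(s)$. The key point, which I expect to be the main obstacle, is that such a slot exists and that all new comparisons involving $s$ are consistent; this follows because any $q, q'$ with $f(q) <' f(s) <' f(q')$ automatically satisfy $q <' q'$, since $q >' q'$ would force $f(q) \ge' f(q')$ by monotonicity of $\le'$, a contradiction. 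Inserting $s$ there produces a preserved order on $Q$.

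Finally I would address the algorithmic claims. Since every state has out-degree one, the functional graph of $f$ decomposes into weakly connected components each containing a single cycle, so detecting whether some cycle has length $\ge 2$ is a standard $\O(|Q|)$ traversal. When all cycles are fixed points, the components are in-trees rooted at those fixed points, and I would build a preserved order in $\O(|Q|)$ directly, rather than through the quadratic insertion above: reverse the edges, run a breadth-first search from each fixed-point root to stratify its tree into levels, and output the states tree by tree, within each tree listing the levels from deepest to shallowest and grouping the nodes of each level according to the already-fixed order of their parents. A short case analysis on the levels of $u, f(u)$ and of $v, f(v)$ then shows this order is non-decreasing for $f$, completing the $\O(|Q|)$ construction.
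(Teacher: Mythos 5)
Your proposal is correct, and its necessity half coincides with the paper's argument (a cycle of length $\ge 2$ forces $q_1 < q_2 < \cdots < q_\ell < q_1$; your ``order-preserving bijection of a finite chain is the identity'' phrasing is the same fact). For sufficiency you take a genuinely different route, in two respects. First, you interpose an existential induction (delete a state $s$ with no preimage, reinsert it into the slot determined by $f(s)$); the paper has no analogue, and your key lemma --- that $f(q) <' f(s) <' f(q')$ forces $q <' q'$ --- is sound and does settle the placement, with states whose image equals $f(s)$ allowed on either side of $s$. Second, and more substantively, your linear-time construction orders each in-tree by \emph{reverse BFS level order}, grouping each level consistently with the order of the parents, whereas the paper performs an inverse \emph{DFS} from the root and claims that ``visited later'' is inherited by images. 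That claim fails for DFS: take a root $r$ with children $c_1,c_2$ visited in this order and a child $d$ of $c_1$; the DFS order is $r,c_1,d,c_2$, so $c_2 < d$ in the induced order, yet $\delta(c_2,a)=r > c_1 = \delta(d,a)$. The level-by-level ordering you use is exactly what repairs this (it is the scheme the paper itself adopts later for $\textsc{ORIENTED}_1$ in Proposition~\ref{pro:unary_oriented}), and your case analysis on levels and parent order goes through. So on the construction step your argument is not merely different from the paper's but is the one that actually works.
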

\begin{proof}
We simply check if the transformation of the single letter of $\mathcal{A}$ contains a cycle of length $\ge 2$, that is $\delta(q_1,a)=q_2,\delta(q_2,a)=q_3,\ldots,\delta(q_\ell,a)=q_1$ for some distinct states $q_1,\ldots,q_\ell$. If so, then from $q_1 < q_2$ (or dually $q_1 > q_2$) it follows that $q_2 < q_3,\ldots,q_\ell<q_1$---a contradiction with that $<$ is an order. Thus the automaton is not monotonic.

Otherwise we have an acyclic digraph of the transformation, and we can fix some order on the connected components (sometimes called \emph{clusters}). Each such a component form a rooted tree. We can perform an inverse depth-first search (DFS) starting from the root. Then $p \le q$ if $p$ is in a component before that of $q$, or they are in the same component but $p$ was visited later than $q$ during the inverse DFS in this component. So if $p \le q$ from the same component, then $\delta(p,a)$ was visited later than $\delta(q,a)$, or $\delta(p,a)=\delta(q,a)$. Thus the order is preserved.
These operations can be done in $\O(|Q|)$ time.
\qed
\end{proof}

Clearly, $\textsc{MONOTONIC}$ is in NP, as we can guess an underlying linear order and check if the action of each letter preserves it (this can be done in $\O(|\Sigma| \times |Q|)$ time).

\begin{proposition}
$\textsc{MONOTONIC}$ is in NP.
\end{proposition}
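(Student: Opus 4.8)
The plan is to exhibit a polynomial-size certificate together with a polynomial-time verification procedure. The natural certificate for membership in $\textsc{MONOTONIC}$ is an underlying linear order of $Q$: I would represent it as a bijection $\pi\colon Q \to \{1,\ldots,|Q|\}$ assigning to each state its rank in the candidate order, which takes $\O(|Q|\log|Q|)$ bits and is thus polynomial in the size of the input.

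The verification has to confirm that for all $p,q \in Q$ and all $a \in \Sigma$, $p \le q$ implies $\delta(p,a) \le \delta(q,a)$, where $\le$ is the order encoded by $\pi$. First I would build, in $\O(|Q|)$ time, the inverse array listing the states $q_1 < q_2 < \cdots < q_{|Q|}$ in increasing order, so that $\pi(q_i)=i$. The key observation that keeps the check efficient is that, because $\le$ is a total order, it suffices to verify the condition on consecutive pairs: if $\pi(\delta(q_i,a)) \le \pi(\delta(q_{i+1},a))$ holds for every $i \in \{1,\ldots,|Q|-1\}$, then by transitivity $\pi(\delta(q_j,a)) \le \pi(\delta(q_k,a))$ for all $j \le k$, so the action of $a$ preserves $\le$. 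For each of the $|\Sigma|$ letters I would scan the $|Q|-1$ consecutive pairs, performing a single comparison of precomputed ranks in constant time per pair, and reject as soon as any comparison fails.

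Altogether the verifier runs in $\O(|\Sigma| \times |Q|)$ time after the $\O(|Q|)$ preprocessing, which is polynomial, and it accepts exactly when the guessed order is an underlying linear order. Hence a monotonic automaton has an accepting computation while a non-monotonic one has none, placing $\textsc{MONOTONIC}$ in NP. I do not anticipate a genuine obstacle here; the only point requiring a moment's thought is the reduction from all pairs to consecutive pairs, which is immediate from transitivity of $\le$, and this is precisely what brings the verification time down from the naive $\O(|\Sigma| \times |Q|^2)$ to linear in $|Q|$ per letter.
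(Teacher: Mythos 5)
Your proposal is correct and matches the paper's argument, which likewise guesses an underlying linear order as the certificate and verifies preservation in $\O(|\Sigma| \times |Q|)$ time. You merely spell out the consecutive-pairs observation that the paper leaves implicit in its claimed running time.
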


\subsection{$\textsc{MONOTONIC}$ is NP-complete}\label{subsec:monotonic_npc}

We reduce $\textsc{MONOTONE-NAE-3SAT}$ to $\textsc{MONOTONIC}$.

$\textsc{NAE-3SAT}$ ($\textsc{NOT-ALL-EQUAL}$) is a variant of 3SAT, where a clause is satisfied if it contains at least one true and one false literal. The variant $\textsc{MONOTONE-NAE-3SAT}$ additionally restricts instances so that every literal is a positive occurrence of a variable (negations are not allowed). From Schaefer's Theorem \cite{Schaefer1978}, we have that $\textsc{NAE-3SAT}$ is NP-complete as well as $\textsc{MONOTONE-NAE-3SAT}$.

As an instance $I$ of $\textsc{MONOTONE-NAE-3SAT}$ we get a set of $n$ boolean variables $\mathcal{V}=\{v_1,\ldots,v_n\}$, and a set of $m$ clauses $\mathcal{C}=\{C_1,\ldots,C_m\}$, each one with exactly 3 literals. A literal is a positive occurrence of a variable $v_i$. The problem is to decide if there exists a satisfying assignment $\sigma\colon \mathcal{V} \to \{0,1\}$ for $I$, that is, for each clause $C_i \in \mathcal{C}$, $C_i$ contains at least one true literal ($v_j \in C_i$ with $\sigma(v_j)=1$) and at least one false literal ($v_j \in C_i$ with $\sigma(v_j)=0$).
We can assume that each variable occurs at least one time, and no variable appears more than once in a clause.
Note that the complement of a satisfying assignment for $I$ is also satisfying.

\subsubsection{Definition of $\mathcal{A}_I$.}

We construct $\mathcal{A}_I = \langle Q,\Sigma,\delta\rangle$ as follows.
For each variable $v_i \in \mathcal{V}$ we create a pair of states $p_i,q_i$. We also add a unique state $s$ (sink).

For a $j$-th clause $C_j = (v_f,v_g,v_h)$ (we fix the order of variables in clauses), we create the \emph{clause gadget} as follows. We add three states $x_j,y_j,z_j$ and three letters $a_j,b_j,c_j$, which correspond to the three occurrences of the variables $v_f,v_g,v_h$, respectively. The action of these letters is defined as follows:
\begin{itemize}
\item $\delta(p_f,a_j) = x_j$ and $\delta(q_f,a_j) = y_j$;
\item $\delta(p_g,b_j) = y_j$ and $\delta(q_g,b_j) = z_j$;
\item $\delta(p_h,c_j) = z_j$ and $\delta(q_h,c_j) = x_j$;
\item $\delta(p_i,a_j) = p_i$ and $\delta(q_i,a_j) = q_i$, for $i < f$;
\item $\delta(p_i,b_j) = p_i$ and $\delta(q_i,b_j) = q_i$, for $i < g$;
\item $\delta(p_i,c_j) = p_i$ and $\delta(q_i,c_j) = q_i$, for $i < h$;
\item $\delta(u,e) = s$, for the other states $u$ and each $e \in \{a_j,b_j,c_j\}$.
\end{itemize}
So the actions of letters $a_j$, $b_j$, $c_j$ send every state from $Q \setminus \{p_i,q_i\}$ either to itself or to $s$.
The clause gadget is presented in~Figure~\ref{fig:clause_gadget}.

\begin{figure}[ht]
\unitlength 12pt
\begin{center}\begin{picture}(12,11)(0,0.5)
\gasset{Nh=2,Nw=2,Nmr=1,ELdist=0.5,loopdiam=1}
\node(x)(6,8){$x_j$}
\node(y)(4,4){$y_j$}
\node(z)(8,4){$z_j$}
\node(p_f)(2,10){$p_f$}
\node(q_f)(0,6){$q_f$}
\node(p_g)(4,0){$p_g$}
\node(q_g)(8,0){$q_g$}
\node(p_h)(12,6){$p_h$}
\node(q_h)(10,10){$q_h$}
\drawedge(p_f,x){$a_j$}
\drawedge(q_f,y){$a_j$}
\drawedge(p_g,y){$b_j$}
\drawedge(q_g,z){$b_j$}
\drawedge(p_h,z){$c_j$}
\drawedge(q_h,x){$c_j$}
\end{picture}\end{center}
\caption{The clause gadget for a $j$-th clause $(v_f,v_g,v_h)$.}
\label{fig:clause_gadget}
\end{figure}
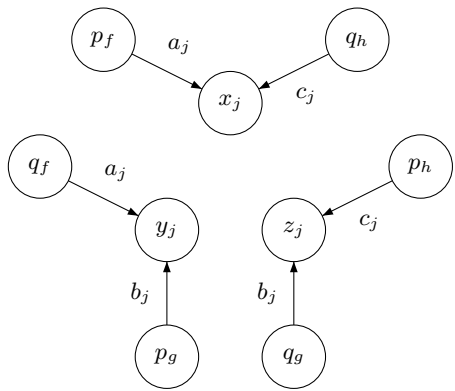

In~Figure~\ref{fig:construction_of_A_I} the construction of $\mathcal{A}_I$ is presented, with the action of $a_1$ as an example, in the case when variable $v_2$ is the first literal in clause $C_1$.

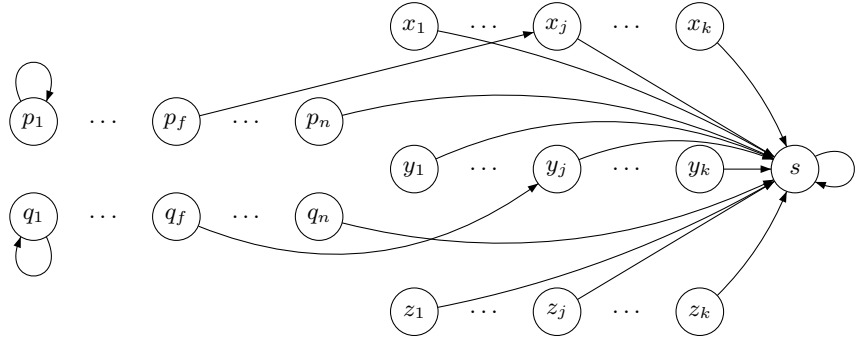
\begin{figure}[ht]
\unitlength 9pt
\begin{center}\begin{picture}(32,13)(0,0.5)
\gasset{Nh=2,Nw=2,Nmr=1,ELdist=0.5,loopdiam=1.5}
\node(p_1)(0,8){$p_1$}\node(q_1)(0,4){$q_1$}
\node[Nframe=n](p_dots1)(3,8){$\dots$}\node[Nframe=n](q_dots1)(3,4){$\dots$}
\node(p_f)(6,8){$p_f$}\node(q_f)(6,4){$q_f$}
\node[Nframe=n](p_dots2)(9,8){$\dots$}\node[Nframe=n](q_dots2)(9,4){$\dots$}
\node(p_n)(12,8){$p_n$}\node(q_n)(12,4){$q_n$}

\node(x_1)(16,12){$x_1$}\node(y_1)(16,6){$y_1$}\node(z_1)(16,0){$z_1$}
\node[Nframe=n](x_dots1)(19,12){$\dots$}\node[Nframe=n](y_dots1)(19,6){$\dots$}\node[Nframe=n](z_dots1)(19,0){$\dots$}
\node(x_j)(22,12){$x_j$}\node(y_j)(22,6){$y_j$}\node(z_j)(22,0){$z_j$}
\node[Nframe=n](x_dots2)(25,12){$\dots$}\node[Nframe=n](y_dots2)(25,6){$\dots$}\node[Nframe=n](z_dots2)(25,0){$\dots$}
\node(x_k)(28,12){$x_k$}\node(y_k)(28,6){$y_k$}\node(z_k)(28,0){$z_k$}
\node(s)(32,6){$s$}

\drawloop[loopangle=90](p_1){}\drawloop[loopangle=270](q_1){}
\drawedge[curvedepth=0](p_f,x_j){}\drawedge[curvedepth=-2.5](q_f,y_j){}
\drawedge[curvedepth=2](p_n,s){}\drawedge[curvedepth=-2](q_n,s){}
\drawedge[curvedepth=.8](x_1,s){}\drawedge[curvedepth=2](y_1,s){}\drawedge[curvedepth=-.8](z_1,s){}
\drawedge[curvedepth=0](x_j,s){}\drawedge[curvedepth=1.2](y_j,s){}\drawedge[curvedepth=0](z_j,s){}
\drawedge[curvedepth=.5](x_k,s){}\drawedge(y_k,s){}\drawedge[curvedepth=-.5](z_k,s){}
\drawloop[loopangle=0](s){}
\end{picture}\end{center}
\caption{The action of the letter $a_j$, where $v_f$ is the first variable in $C_j$.}
\label{fig:construction_of_A_I}
\end{figure}

In summary, we have $|Q|=2n+3m+1$ states and $|\Sigma|=3m$ letters.

\subsubsection{Correctness of the Reduction.}

\begin{theorem}\label{thm:monotonic_hard}
$\mathcal{A}_I$ is monotonic if and only if $I$ has a satisfying assignment.
\end{theorem}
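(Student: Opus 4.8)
The plan is to prove the two implications separately, tying a satisfying (NAE) assignment $\sigma$ to an underlying linear order through the single bit attached to each variable: the relative order of the pair $p_i, q_i$. Concretely I would set up the dictionary $\sigma(v_i) = 1$ iff $p_i < q_i$ (and $\sigma(v_i)=0$ iff $q_i < p_i$), and show that NAE-satisfiability of a clause is exactly the condition forbidding a ``monochromatic'' cyclic orientation of its triangle $x_j, y_j, z_j$.

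For the direction monotonic $\Rightarrow$ satisfiable---which I expect to be the short, clean half---I would fix an underlying order $\le$ and define $\sigma$ by the dictionary; since $p_i \ne q_i$ this is well defined. For a clause $C_j = (v_f,v_g,v_h)$ I read off the three constraints forced by order-preservation: applying $a_j$ to $p_f, q_f$ forces $x_j < y_j$ exactly when $\sigma(v_f)=1$ (the images are distinct, so the weak inequality upgrades to strict); likewise $b_j$ forces $y_j < z_j \iff \sigma(v_g)=1$ and $c_j$ forces $z_j < x_j \iff \sigma(v_h)=1$. If all three variables were true these give $x_j < y_j < z_j < x_j$, impossible in a linear order; symmetrically all three false gives the reverse cycle. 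Hence every clause has a true and a false literal, so $\sigma$ is NAE-satisfying.

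For satisfiable $\Rightarrow$ monotonic I would exhibit an explicit order and verify preservation under every letter. Reading from smallest to largest: first all variable pairs in increasing index $p_1 q_1, p_2 q_2, \dots, p_n q_n$, each pair internally oriented by $\sigma$; then the $m$ clause triangles as consecutive blocks, each internally ordered by the transitive tournament determined by $\sigma(v_f),\sigma(v_g),\sigma(v_h)$ (acyclic precisely because NAE excludes the all-equal case); and finally the sink $s$ on top. The crux is checking each $a_j$ (symmetrically $b_j, c_j$) is monotone: under $a_j$ the states split, along the order, into the block $\{p_i,q_i : i<f\}$ (fixed, staying in the variable region), the pair $p_f, q_f$ (sent up into triangle block $j$), and everything above (all sent to $s$). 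Since variable region $<$ triangle region $<\{s\}$, this ascent is non-decreasing, the only genuine comparison being $x_j$ versus $y_j$ within the pair image, which matches $\sigma(v_f)$ by construction.

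The main obstacle is this last verification: one must confirm that the wholesale collapse ``all remaining states $\to s$'' never breaks monotonicity and that the $m$ triangle blocks carry no hidden cross-constraints. The structural fact that makes it go through is that under $a_j$ every state other than the low-index fixed variables and $p_f, q_f$---including all states of other triangles and $s$ itself---maps to the single top element $s$, so distinct triangles are order-independent and may be placed as free blocks. I would record this ``the preimage of $s$ is an order-convex suffix'' observation explicitly, since it is what decouples the clauses and lets the local triangle analysis of the first half suffice globally.
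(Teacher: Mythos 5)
Your proposal is correct and follows essentially the same route as the paper: the same dictionary between the bit $\sigma(v_i)$ and the relative order of $p_i,q_i$ (up to the immaterial flip of which value corresponds to $p_i<q_i$, harmless since NAE-satisfaction is preserved under complementation), the same cyclic-contradiction argument $x_j<y_j<z_j<x_j$ for one direction, and the same explicit order (variable pairs by index, then clause triangles as blocks ordered by the acyclic tournament, then $s$ on top) with the same letter-by-letter verification for the other. Your explicit observation that the preimage of $s$ under each letter is an order-convex suffix is a clean packaging of the case analysis the paper carries out directly.
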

\begin{proof}
Suppose that $\mathcal{A}_I$ is monotonic with the underlying linear order $\le$. We define an assignment $\sigma$ for $I$:
$\sigma(v_i) = 0$ if $p_i < q_i$, and $\sigma(v_i) = 1$ otherwise. We show that $\sigma$ is satisfying for $I$.

Assume for the contrary that there is a clause $C_j = (v_f,v_g,v_h)$, where all the three variables evaluate to $0$. This means that $p_f < q_f$, $p_g < q_g$, and $p_h < q_h$. From that $\le$ is preserved, we have:
\begin{itemize}
\item $\delta(p_f,a_j) = x_j < y_j = \delta(q_f,a_j)$;
\item $\delta(p_g,b_j) = y_j < z_j = \delta(q_g,b_j)$;
\item $\delta(p_h,c_j) = z_j < x_j = \delta(q_h,c_j)$.
\end{itemize}
Thus $x_j < y_j < z_j < x_j$, a contradiction with that $\le$ is an order.
The argument holds in the dual way in the case with all the three variables evaluated to $1$. Hence, $\sigma$ must be satisfying.

Now, suppose that there is a satisfying assignment $\sigma$. We define a linear order $\le$ and show that it is preserved.
To do so, we define $\tau\colon Q \to \mathbb{N}$, which for states $q \in Q$ assigns pairwise distinct natural numbers that will determine $\le$.

First, for any $1 \le i \le n$ let:
\begin{itemize}
\item $\tau(p_i) = 2i-1$ and $\tau(q_i) = 2i$ if $\sigma(v_i)=0$;
\item $\tau(p_i) = 2i$ and $\tau(q_i) = 2i-1$ if $\sigma(v_i)=1$.
\end{itemize}

For $u \in \{x_j,y_j,z_j\}$ we define $\tau(u) \in \{2n+3j-2,2n+3j-1,2n+3j\}$, depending on the assignment of the variables in $C_j = (v_f,v_g,v_h)$. Assignment $\sigma$ uniquely determines the relation between $x_j,y_j,z_j$ in an underlying linear order. Each of the six satisfying combinations of $\sigma(v_f),\sigma(v_g),\sigma(v_h)$ defines an acyclic relation between $x_j,y_j,z_j$, which is enforced by the action of the letters $a_j,b_j,c_j$.
For instance, if $\sigma(v_f)=0$, then $p_f < q_f$, which implies $\delta(p_f,a_j) = x_j < y_j = \delta(q_f,a_j)$. If $\sigma(v_g)=0$ then $y_j < z_j$. Then it must be $\sigma(v_h)=1$ and $z_j > x_j$. If $\sigma(v_g)=1$ then $y_j > z_j$, and we have either $z_j < x_j < y_j$ if $\sigma(v_h)=0$, or $x_j < z_j < y_j$ otherwise. This is dual for $\sigma(v_f)=1$.

Finally we define $\tau(s) = 3n+3m+1$. Hence, in our order $\le$, first there are states $p_i,q_i$ sorted increasingly by $i$. The order between $p_i$ and $q_i$ depends on the assignment. Next, there are states from clause gadgets $x_j,y_j,z_j$ sorted by $j$. The exact order on particular $x_j,y_j,z_j$ depends on the assignment as described above. Finally $s$ is the last state with $u \le s$ for any $u \in Q$. The order is shown in~Figure~\ref{fig:construction_of_A_I} (from left to right).

Now we show that $\le$ is indeed an underlying linear order. Consider a letter $a_j$ for any $1 \le j \le k$, and let $C_j=(v_f,v_g,v_h)$.
We show that for every pair of distinct states the order $\le$ is preserved.
\begin{itemize}
\item For the pair $p_f,q_f$, if $p_f < q_f$ then also $\delta(p_f,a_j) = x_j < y_j = \delta(q_f,a_j)$, and if $p_f > q_f$ then $\delta(p_f,a_j) = x_j > y_j = \delta(q_i,a_j)$, since we have chosen the order of $x_j,y_j,z_j$ to be consistent with $\sigma$, as described above.
\item For $p_f$ (or $q_f$) and $u \in Q \setminus \{p_f,q_f\}$, if $p_f < u$ then $\delta(p_f,a_j) = x_j < \delta(u,a_j) = s$. If $u < p_f$ then $\delta(u,a_j) = u < x_j = \delta(p_i,a^j_i)$. The same holds for $q_f$ mapped to $y_j$.
\item For distinct states $u,v \in Q \setminus \{p_f,q_f\}$ with $u < v$, if $\delta(u,a_j) = u$, then either $\delta(v,a_j) = v > u$ or $\delta(v,a_j) = s > u$. If $\delta(u,a_j) = s$ then also $\delta(v,a_j) = s$.
\end{itemize}
The same arguments work for letters $b_j$ and $c_j$. It follows that any letter preserves $\le$, so $\le$ is an underlying linear order of $\mathcal{A}_I$.
\qed
\end{proof}
 
We can state our main
\begin{theorem}
The problem of checking whether a given automaton is monotonic is NP-complete. 
\end{theorem}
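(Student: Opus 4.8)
The plan is to assemble the final result from the two ingredients already established, treating it essentially as a corollary of Theorem~\ref{thm:monotonic_hard}. First I would recall that $\textsc{MONOTONIC}$ lies in NP, which is exactly the content of the preceding proposition: a nondeterministic procedure guesses a linear order of $Q$ and verifies in $\O(|\Sigma| \times |Q|)$ time that the action of every letter preserves it. It therefore suffices to establish NP-hardness.

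For hardness I would exhibit the many-one reduction $I \mapsto \mathcal{A}_I$ from $\textsc{MONOTONE-NAE-3SAT}$ and check that it is computable in polynomial time. Given an instance $I$ with $n$ variables and $m$ clauses, the automaton $\mathcal{A}_I$ has $|Q| = 2n+3m+1$ states and $|\Sigma| = 3m$ letters, and its transition function is specified by the $|Q| \times |\Sigma|$ images, each of which is determined in constant time from the clause in which the relevant letter occurs. Hence $\mathcal{A}_I$ can be written down in time polynomial in $|I|$, so the map is a genuine polynomial-time reduction.

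Correctness of the reduction is precisely Theorem~\ref{thm:monotonic_hard}, which states that $\mathcal{A}_I$ is monotonic if and only if $I$ admits a satisfying assignment. Since $\textsc{MONOTONE-NAE-3SAT}$ is NP-complete by Schaefer's Theorem~\cite{Schaefer1978}, it follows that $\textsc{MONOTONIC}$ is NP-hard, and combined with membership in NP this yields NP-completeness. The substantive difficulty of the whole argument does not lie in this concluding step but was already resolved in Theorem~\ref{thm:monotonic_hard}, and in particular in its forward direction, where from an arbitrary underlying linear order one must extract a not-all-equal satisfying assignment by exploiting the cyclic obstruction $x_j < y_j < z_j < x_j$ that the clause gadget forces whenever all three variables of a clause agree. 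What remains here is only the routine verification that the reduction runs in polynomial time, which is immediate from the state and letter counts above.
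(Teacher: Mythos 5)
Your proposal is correct and follows exactly the route the paper intends: membership in NP from the preceding proposition, plus NP-hardness via the polynomial-time reduction $I \mapsto \mathcal{A}_I$ whose correctness is Theorem~\ref{thm:monotonic_hard}, with $\textsc{MONOTONE-NAE-3SAT}$ NP-complete by Schaefer's Theorem. The explicit check that the reduction is polynomial-time computable (via the counts $|Q|=2n+3m+1$ and $|\Sigma|=3m$) is a welcome detail the paper leaves implicit.
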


\subsection{Reduction from $\textsc{MONOTONIC}$ to $\textsc{MONOTONIC}_2$}

Let $\mathcal{A}=\langle Q,\Sigma,\delta\rangle$ be an automaton with $Q = \{v_1,\ldots,v_n\}$ and $\Sigma = \{a_1,\ldots,a_k\}$ with $k \ge 3$.
We construct a binary automaton $\mathcal{B}_\mathcal{A}=\langle Q_\mathcal{B},\{a,b\},\delta_\mathcal{B}\rangle$ such that $\mathcal{A}$ is monotonic if and only if $\mathcal{B}_\mathcal{A}$ is monotonic.

$Q_\mathcal{B}$ consists of $kn$ states $q^i_j$ for $1 \le i \le k,1 \le j \le n$, and a unique state $s$ (sink).
Now we define the action of $a$. For each state $q^i_j$ with $1 \le i \le k-1$ and $1 \le j \le n$, we define $\delta_\mathcal{B}(q^i_j,a) = \delta_\mathcal{B}(q^{i+1}_j)$. For each $q^k_j$ we define $\delta_\mathcal{B}(q^k_j,a) = s$. Finally $\delta(s,a) = s$.
The action of $b$ in each set $\{q^i_1,\ldots,q^i_n\}$ corresponds to the action of the $i$-th letter of $\Sigma$ on $Q$: For $1 \le i \le k$ and $1 \le j \le n$, if $\delta(v_j,a_i) = v_g$ then we define $\delta_\mathcal{B}(q^i_j,b) = q^i_g$. Finally $\delta(s,b) = s$.
The construction of $\mathcal{B}_\mathcal{A}$ is shown in~Figure~\ref{fig:construction_of_B}.

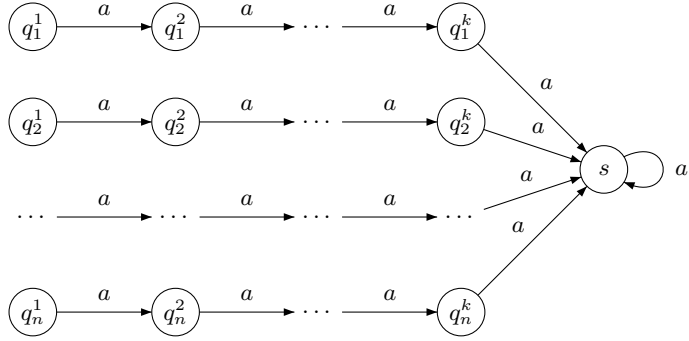
\begin{figure}[ht]
\unitlength 9pt
\begin{center}\begin{picture}(24,13)(0,0.5)
\gasset{Nh=2,Nw=2,Nmr=1,ELdist=0.5,loopdiam=1.5}
\node(q^1_1)(0,12){$q^1_1$}
\node(q^1_2)(0,8){$q^1_2$}
\node[Nframe=n](q^1_dots)(0,4){$\dots$}
\node(q^1_n)(0,0){$q^1_n$}
\node(q^2_1)(6,12){$q^2_1$}
\node(q^2_2)(6,8){$q^2_2$}
\node[Nframe=n](q^2_dots)(6,4){$\dots$}
\node(q^2_n)(6,0){$q^2_n$}
\node[Nframe=n](dots1)(12,12){$\dots$}
\node[Nframe=n](dots2)(12,8){$\dots$}
\node[Nframe=n](dotsdots)(12,4){$\dots$}
\node[Nframe=n](dotsn)(12,0){$\dots$}
\node(q^k_1)(18,12){$q^k_1$}
\node(q^k_2)(18,8){$q^k_2$}
\node[Nframe=n](q^k_dots)(18,4){$\dots$}
\node(q^k_n)(18,0){$q^k_n$}
\node(s)(24,6){$s$}
\drawedge(q^1_1,q^2_1){$a$}
\drawedge(q^1_2,q^2_2){$a$}
\drawedge(q^1_dots,q^2_dots){$a$}
\drawedge(q^1_n,q^2_n){$a$}
\drawedge(q^2_1,dots1){$a$}
\drawedge(q^2_2,dots2){$a$}
\drawedge(q^2_dots,dotsdots){$a$}
\drawedge(q^2_n,dotsn){$a$}
\drawedge(dots1,q^k_1){$a$}
\drawedge(dots2,q^k_2){$a$}
\drawedge(dotsdots,q^k_dots){$a$}
\drawedge(dotsn,q^k_n){$a$}
\drawedge(q^k_1,s){$a$}
\drawedge(q^k_2,s){$a$}
\drawedge(q^k_dots,s){$a$}
\drawedge(q^k_n,s){$a$}
\drawloop[loopangle=0](s){$a$}
\end{picture}\end{center}
\caption{The action of $a$ in $\mathcal{B}_\mathcal{A}$.}
\label{fig:construction_of_B}
\end{figure}

\begin{theorem}\label{thm:monotonic2_hard}
$\mathcal{B}_\mathcal{A}$ is monotonic if and only if $\mathcal{A}$ is monotonic.
\end{theorem}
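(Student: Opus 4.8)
The plan is to prove the two implications separately, in each case reading off an underlying linear order of one automaton from one of the other. For the direction ``$\mathcal{A}$ monotonic $\Rightarrow$ $\mathcal{B}_\mathcal{A}$ monotonic,'' I would take an underlying linear order $\preceq$ of $\mathcal{A}$ and order $Q_\mathcal{B}$ lexicographically \emph{by column first}: set $q^i_j \le q^{i'}_{j'}$ iff $i < i'$, or $i = i'$ and $v_j \preceq v_{j'}$, and make $s$ the unique maximum. The letter $b$ fixes the column index and acts inside column $i$ exactly as $a_i$ acts on $Q$; since every $a_i$ preserves $\preceq$, and since comparisons between distinct columns are decided by the unchanged column index alone, $b$ preserves $\le$. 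The letter $a$ increments the column index, sending column $k$ and $s$ to $s$; because $s$ is the maximum and column $k$ sits immediately below it, a short case analysis shows $a$ can invert no pair, so $\le$ is an underlying linear order of $\mathcal{B}_\mathcal{A}$.

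For the converse, assume $\mathcal{B}_\mathcal{A}$ has an underlying linear order $\le$, and define $\preceq$ on $Q$ by reading off the first column: $v_j \preceq v_{j'}$ iff $q^1_j \le q^1_{j'}$. This is plainly a linear order, and the task is to show that each $a_i$ preserves it. The key structural observation is that for every $i$ the word $a^{i-1}$ maps column $1$ bijectively onto column $i$ via $q^1_j \mapsto q^i_j$ and preserves $\le$ (as a power of the order-preserving $a$); hence $q^1_j \le q^1_{j'}$ implies $q^i_j \le q^i_{j'}$. Thus the total order induced by $\le$ on the indices $\{1,\dots,n\}$ through column $1$ is contained in the one induced through column $i$.

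The main obstacle is that $a$ is not invertible, so a priori order information can only be pushed \emph{from} column $1$ \emph{into} the deeper columns, while the verification for $b$ requires the reverse transfer. I would resolve this by a counting argument: two linear orders on the same $n$-element set have exactly $\binom{n}{2}$ strict pairs each, so a containment between them forces equality. Therefore all columns induce the \emph{same} index order, namely $\preceq$, and we have $q^i_j \le q^i_{j'}$ iff $v_j \preceq v_{j'}$ for every $i$. With this the argument closes immediately: given $v_j \preceq v_{j'}$ we get $q^i_j \le q^i_{j'}$; applying $b$ (which preserves $\le$ and realizes $a_i$ inside column $i$) yields $q^i_{\delta(v_j,a_i)} \le q^i_{\delta(v_{j'},a_i)}$; and translating back through column $i$ gives $\delta(v_j,a_i) \preceq \delta(v_{j'},a_i)$. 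Hence $\preceq$ is an underlying linear order of $\mathcal{A}$, completing the equivalence.
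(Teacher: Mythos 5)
Your proof is correct and follows essentially the same route as the paper: the same column-first lexicographic order with $s$ maximal for the forward direction, and reading $\preceq$ off column $1$ for the converse. The only difference is cosmetic: where the paper implicitly infers $q^1_x \le q^1_y$ from $q^i_x \le q^i_y$ by contraposition (using totality and the forward transfer along $a$), you justify the same reverse transfer with an explicit counting argument showing all columns induce the same index order.
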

\begin{proof}
Suppose that $\mathcal{A}$ is monotonic with the underlying linear order $\le_\mathcal{A}$. We define the linear order $\le_\mathcal{B}$ on the states of $\mathcal{B}_\mathcal{A}$.
For $1 \le i,f \le k$ and $1 \le j,g \le n$, let $q^i_j \le_\mathcal{B} q^f_g$ if and only if $i < f$, or $i=f$ and $v_j <_\mathcal{A} v_g$. Also, let $q^i_j <_\mathcal{B} s$ for each $i,j$. The order $\le_\mathcal{B}$ is linear, since $\le_\mathcal{A}$ is linear. We show that $\le_\mathcal{B}$ is an underlying linear order of $\mathcal{B}_\mathcal{A}$.

Clearly, the actions of both letters preserve $\le_\mathcal{B}$ on states $q^i_j$ and $s$.
Consider a pair $q^i_j,q^f_g$ with $q^i_j \le_\mathcal{B} q^f_g$. Then $i \le f$ by definition. Consider the following cases:
\begin{itemize}
\item If $i < f$, then $\delta_\mathcal{B}(q^i_j,a) = \delta_\mathcal{B}(q^{i+1}_j,a) <_\mathcal{B} \delta_\mathcal{B}(q^f_g,a)$, since $\delta_\mathcal{B}(q^f_g,a)$ is either $q^{f+1}_g$ or $s$.
Also, for some $x,y$, $\delta_\mathcal{B}(q^i_j,b) = q^i_x <_\mathcal{B} q^f_y = \delta_\mathcal{B}(q^f_g)$, since $i < f$.
\item If $i = f$, then $v_j <_\mathcal{A} v_g$ by definition. If $i=k$ then $\delta_\mathcal{B}(q^i_j,a) = \delta_\mathcal{B}(q^i_g,a) = s$; otherwise $\delta_\mathcal{B}(q^i_j,a) = q^{i+1}_j \le_\mathcal{B} q^{i+1}_g = \delta_\mathcal{B}(q^i_g,a)$ from $v_j <_\mathcal{A} v_g$.
Also, $v_j <_\mathcal{A} v_g$ implies $\delta(v_j,a_i) = v_x \le_\mathcal{A} v_y = \delta(v_g,a_i)$ for some $x,y$. So $\delta_\mathcal{B}(q^i_j,b) = q^i_x \le_\mathcal{B} q^i_y = \delta_\mathcal{B}(q^i_g,b)$.
\end{itemize}
Thus $\le_\mathcal{B}$ is an underlying linear order of $\mathcal{B}_\mathcal{A}$.

Now, suppose that $\mathcal{B}_\mathcal{A}$ is monotonic with an underlying linear order $\le_\mathcal{B}$. We define $\le_\mathcal{A}$ on the states of $\mathcal{A}$: for $1 \le j,g \le n$, $v_j <_\mathcal{A} v_g$ if and only if $q^1_j <_\mathcal{B} q^1_g$.
Observe that for any $j \neq g$, $q^1_j <_\mathcal{B} q^1_g$ implies $q^i_j <_\mathcal{B} q^i_g$ for each $2 \le i \le k$ due to the action of $a$.
Consider two states $v_j, v_g$ with $v_j < v_g$ and the $i$-th letter $a_i$. By definition $q^1_j <_\mathcal{B} q^1_g$, and so $q^i_j <_\mathcal{B} q^i_g$. This implies $\delta_\mathcal{B}(q^i_j,b) = q^i_x \le_\mathcal{B} q^i_y = \delta_\mathcal{B}(q^i_g,b)$ for some $x,y$, and it follows that $q^1_x \le_\mathcal{B} q^1_y$. Thus $\delta(v_j,a_i) = v_x \le_\mathcal{A} v_y = \delta(v_g,a_i)$, and the order $\le_\mathcal{A}$ is an underlying linear order of $\mathcal{A}$.
\qed
\end{proof}

As a corollary we obtain that $\textsc{MONOTONIC}_2$ is also NP-complete. We can reduce an instance of $\textsc{MONOTONE-NAE-3SAT}$ with $n$ variables and $m$ clauses to a binary automaton with $3m(2n+3m+1)+1$ states.

\begin{corollary}
The problem of checking whether a given binary automaton $\mathcal{A}$ is monotonic is NP-complete.
\end{corollary}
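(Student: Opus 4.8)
The plan is to combine the two reductions already in hand into a single polynomial-time reduction from $\textsc{MONOTONE-NAE-3SAT}$ to $\textsc{MONOTONIC}_2$, and to inherit NP-membership from the unrestricted case. Membership is immediate: a binary automaton is in particular an automaton, so $\textsc{MONOTONIC}_2$ is a subproblem of $\textsc{MONOTONIC}$, which is in NP by the earlier proposition; hence $\textsc{MONOTONIC}_2 \in \mathrm{NP}$.

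For hardness I would chain the two constructions. Starting from an instance $I$ of $\textsc{MONOTONE-NAE-3SAT}$ with $n$ variables and $m$ clauses, first form $\mathcal{A}_I$, which by Theorem~\ref{thm:monotonic_hard} is monotonic if and only if $I$ is satisfiable; recall that $\mathcal{A}_I$ has $2n+3m+1$ states and $3m$ letters. Feeding $\mathcal{A}_I$ into the second construction yields the binary automaton $\mathcal{B}_{\mathcal{A}_I}$, which by Theorem~\ref{thm:monotonic2_hard} is monotonic if and only if $\mathcal{A}_I$ is. Composing the two equivalences gives that $\mathcal{B}_{\mathcal{A}_I}$ is monotonic if and only if $I$ is satisfiable, so the map $I \mapsto \mathcal{B}_{\mathcal{A}_I}$ is the desired many-one reduction.

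Finally I would verify that this composition is a genuine polynomial reduction, which is also where the only real care is needed. The second construction assumes an input alphabet of size $k \ge 3$; since we may assume $m \ge 1$ (an empty clause set is trivially satisfiable), the intermediate automaton $\mathcal{A}_I$ has exactly $3m \ge 3$ letters, so the hypothesis of Theorem~\ref{thm:monotonic2_hard} is met and the reduction is not vacuous. Both constructions are manifestly computable in time polynomial in $|I|$, and the resulting binary automaton has $3m(2n+3m+1)+1$ states, which is polynomial in $n$ and $m$. Hence $\textsc{MONOTONIC}_2$ is NP-hard, and together with membership in NP it is NP-complete.
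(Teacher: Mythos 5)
Your proposal is correct and follows exactly the paper's route: the corollary is obtained by composing the reduction $I \mapsto \mathcal{A}_I$ with the reduction $\mathcal{A} \mapsto \mathcal{B}_\mathcal{A}$, yielding a binary automaton with $3m(2n+3m+1)+1$ states, with NP-membership inherited from the unrestricted case. Your explicit check that $\mathcal{A}_I$ has $3m \ge 3$ letters (so the hypothesis $k \ge 3$ of the second construction is met) is a small detail the paper leaves implicit, but otherwise the arguments coincide.
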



\section{Oriented Automata}\label{sec:oriented}

The following definition of oriented automata is due to Eppstein \cite{Ep1990} (who used the term \emph{monotonic}). An automaton is \emph{oriented} if there is a cyclic order of the states preserved by the action of the letters. Formally, there is a cyclic order $q_1,\ldots,q_n$ such that for every $a \in \Sigma$, the sequence $\delta(q_1,a),\ldots,\delta(q_n,a)$, after removal of possibly adjacent duplicate states (the last is also adjacent with the first), is a subsequence of a cyclic permutation $q_i,\ldots,q_n,q_1,\ldots,q_{i-1}$ of the cyclic order, for some $1 \le i \le n$.
Note that if $q_1,\ldots,q_n$ is a cyclic order then also $q_i,\ldots,q_n,q_1,\ldots,q_{i-1}$ is for every $1 \le i \le n$.
Figure~\ref{fig:cyclic_order} presents a cyclic order of some unary oriented automaton.
Every monotonic automaton is oriented, since if a linear order is preserved, then it is also preserved as a cyclic order. But the converse does not necessarily hold.

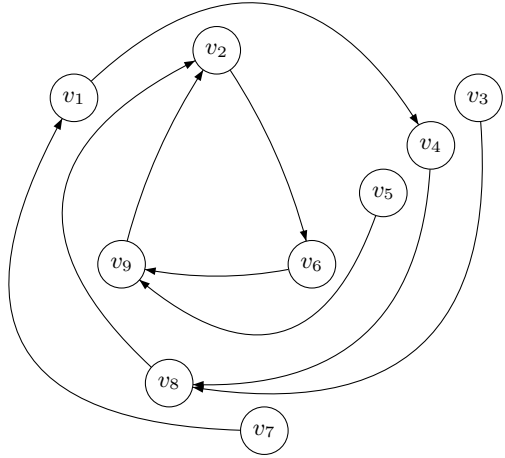
\begin{figure}[ht]
\unitlength 9pt
\begin{center}\begin{picture}(24,17)(0,0.5)
\gasset{Nh=2,Nw=2,Nmr=1,ELdist=0.5,loopdiam=1.5}
\node(v1)(4,14){$v_1$}
\node(v2)(10,16){$v_2$}
\node(v3)(21,14){$v_3$}
\node(v4)(19,12){$v_4$}
\node(v5)(17,10){$v_5$}
\node(v6)(14,7){$v_6$}
\node(v7)(12,0){$v_7$}
\node(v8)(8,2){$v_8$}
\node(v9)(6,7){$v_9$}
\drawedge[curvedepth=0.5](v2,v6){}
\drawedge[curvedepth=0.5](v6,v9){}
\drawedge[curvedepth=0.5](v9,v2){}
\drawedge[curvedepth=5.5](v8,v2){}
\drawedge[curvedepth=4](v4,v8){}
\drawedge[curvedepth=6](v3,v8){}
\drawedge[curvedepth=4.5](v5,v9){}
\drawedge[curvedepth=5](v1,v4){}
\drawedge[curvedepth=7](v7,v1){}
\end{picture}\end{center}
\caption{The cyclic order $(v_1,v_2,\ldots,v_9)$ (clockwise) of a unary automaton.}
\label{fig:cyclic_order}
\end{figure}

Let $\textsc{ORIENTED}$ be the problem of deciding if a given automaton is oriented. As before, we consider $\textsc{ORIENTED}_k$ with the restriction to $k$-letter alphabets. Again, $\textsc{ORIENTED}_1$ can be easily solved in $\O(n)$ time due to the following

\begin{proposition}\label{pro:unary_oriented}
A unary automaton is oriented if and only if all cycles in the transformation of the single letter have the same length.
There is an algorithm solving the problem $\textsc{ORIENTED}_1$ and finding a cyclic order if it exists, and working in $\O(|Q|)$ time.
\end{proposition}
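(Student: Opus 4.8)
The plan is to establish the stated characterization and then extract the algorithm from its proof. Throughout, write $\delta$ for the transformation of the single letter and recall the standard shape of its functional graph: $Q$ splits into connected components, each consisting of a unique directed cycle with in-trees rooted at the cycle states, all edges pointing towards the cycle. Let $C \subseteq Q$ be the set of states lying on cycles; then $\delta$ restricts to a permutation of $C$, and the cycles of this permutation are exactly the cycles of the transformation.

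For necessity, suppose $\mathcal{A}$ is oriented with cyclic order $q_1, \ldots, q_n$. First I would show that the cyclic order induced on $C$ — the states of $C$ listed in the order they occur in $q_1, \ldots, q_n$, say $c_1, \ldots, c_m$ — is preserved by $\delta|_C$. The key observation is that, since the oriented condition forces the deduplicated image sequence to be a subsequence of a single rotation $q_i, \ldots, q_{i-1}$, that deduplicated sequence has pairwise distinct entries; as $\delta$ surjects $C$ onto $C$, every element of $C$ occurs in it exactly once, and the occurrences of $\delta(c_1), \delta(c_2), \ldots, \delta(c_m)$ appear in this left-to-right order. Hence, as a cyclic sequence, $(\delta(c_1), \ldots, \delta(c_m))$ is a rotation of $(c_1, \ldots, c_m)$, i.e. $\delta(c_a) = c_{a+s}$ (indices modulo $m$) for a fixed shift $s$. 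Such a rotation decomposes into $\gcd(m,s)$ cycles, each of length $m / \gcd(m,s)$, so all cycles of $\delta$ have the same length. I expect this restriction step — correctly accounting for the deduplication and the subsequence relation — to be the main obstacle of the whole proof; everything else is comparatively routine.

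For sufficiency, suppose every cycle has the common length $\ell$. I would first define a phase function $\phi \colon Q \to \mathbb{Z}_\ell$ by $\phi(q) = (i - d) \bmod \ell$, where $\delta^d(q)$ is the first cycle state reached from $q$ and that state has index $i$ on its cycle; this is well defined precisely because all cycles have length $\ell$, and it satisfies $\phi(\delta(q)) = \phi(q) + 1$. Writing $Q_t = \phi^{-1}(t)$, the map $\delta$ sends $Q_t$ into $Q_{t+1}$. The cyclic order will be obtained by concatenating the blocks $Q_0, Q_1, \ldots, Q_{\ell-1}$ around the circle; it then suffices to put a linear order on each block so that every $\delta \colon Q_t \to Q_{t+1}$ is order-preserving, because the image of the whole cyclic order is then the block-rotation $Q_1, \ldots, Q_{\ell-1}, Q_0$, which after removing adjacent duplicates is a subsequence of the rotation starting at $Q_1$. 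To order a block, I would sort its states first by the index of the component into which they flow — which $\delta$ preserves — and, within a component, by the inverse-DFS order on its in-tree, exactly as in the proof for unary monotonic automata above; since $\delta$ maps each in-forest forward consistently, this ordering is preserved across consecutive blocks.

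Finally, for the algorithm, all cycles of a functional graph together with their lengths can be located in $\O(|Q|)$ time by following the out-pointers with the usual visited/finished marking, and comparing the lengths settles $\textsc{ORIENTED}_1$. When they are equal, computing $\phi$, the component indices, and the per-block inverse-DFS orders are all linear-time graph traversals, so a valid cyclic order is produced in $\O(|Q|)$ time as well.
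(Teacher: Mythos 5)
Your necessity argument is correct and takes a genuinely different route from the paper's. You restrict the given cyclic order to the set $C$ of cycle states, argue via the deduplication/subsequence condition that $\delta|_C$ must act as a rotation of that cyclic order, and conclude that all cycles have length $m/\gcd(m,s)$. The paper instead takes two cycles of lengths $k<\ell$, passes to the power $a^k$, and uses the resulting fixed point to pin the cyclic order down to a linear one that a nontrivial cycle of $a^k$ then violates. Your version avoids powers of the letter entirely and is arguably cleaner, at the price of the careful bookkeeping (which you do supply) showing that the preimage blocks of the image sequence reproduce the cyclic order on $C$ faithfully.

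The sufficiency direction, however, has a genuine gap in the within-block ordering. First, ``the inverse-DFS order on its in-tree'' does not define a total order on $Q_t$ restricted to one component: for $\ell\ge 2$ a single block contains states from the in-trees of several different cycle states of that component (those at levels $d$ with $d\equiv i-t\pmod{\ell}$ for varying $i$), and you never specify how these are interleaved, which matters. Second, and more seriously, DFS order (whether you read it forwards or backwards) is \emph{not} preserved by the parent map $\delta$, so the construction can fail even inside a single in-tree. Concretely, take $\ell=1$, a fixed point $r$ with preimages $A,B$, and preimages $a_1$ of $A$ and $b_1$ of $B$: a DFS from $r$ visits $r,A,a_1,B,b_1$, and in the reverse-visit order one gets $B<a_1$ while $\delta(B)=r>A=\delta(a_1)$; the image sequence of the resulting cyclic order contains $r$ in two non-adjacent positions and hence is not a subsequence of any rotation. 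The same phenomenon occurs for every $\ell\ge 2$ by padding depths by multiples of $\ell$. What is needed is the \emph{level} order: within a block, states of larger level come first, and ties within a level are broken consistently with the order already fixed on their images---exactly the inverse breadth-first search with prepending that the paper uses in its own proof of this proposition. (The paper's proof of the unary \emph{monotonic} case, which you cite as your template, itself says ``DFS'' where only ``BFS'' works; your proof inherits that slip, and here it breaks the construction.) With BFS in place of DFS, your phase decomposition and the $\O(|Q|)$ analysis go through.
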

\begin{proof}
Let $a$ be the single letter of the alphabet and $n=|Q|$. Suppose that $(c_1,\ldots,c_k)$ and $(d_1,\ldots,d_\ell)$ are two cycles in the transformation of $a$, with $1 \le k < \ell$. Then, the transformation of $a^k$ has the cycle $(d_1,d_k,\ldots,d_{(m-1)k \mod \ell})$ of length $m$, for some $2 \le m \le \ell$. On the other hand it has the fixed point $c_1$. Let $q_1,\ldots,q_{n-1},q_n=c_1$ be a cyclic order of the states of the automaton. Since the transformation of $a^k$ has a cycle of length $\ge 2$ (which does not involve $c_1$), there are two states $q_i,q_j$ with $i<j<n$, $\delta(q_i,a^k)=q_f$, and $\delta(q_j,a^k)=q_g$, such that $f>g$. It follows that $\delta(q_i,c_1)=q_f,\delta(q_j,c_1)=q_g,\delta(c_1,a^k)=c_1$ violates the cyclic order $q_1,\ldots,q_g,\ldots,q_f,\ldots,q_n=c_1$, since $(q_f,q_g,c_1)$ is not a subsequence of any cyclic permutation of the cyclic order---a contradiction.

Assume now that we have $m$ cycles of the same length $k$:
$$(c^1_1,\ldots,c^1_k),(c^2_1,\ldots,c^2_k),\cdots,(c^m_1,\ldots,c^m_k),$$
so $c^j_i$ is the $i$-th state of the $j$-th cycle, and $\delta(c^j_i) = c^j_{i \mod k + 1}$.
We can compute a cyclic order by breadth-first search (BFS) in the inverse digraph of the transformation of $a$. The constructed cyclic order will have the form
$$Q^1_1,c^1_1,\ldots,Q^m_1,c^m_1,\cdots,Q^1_k,c^1_k,\ldots,Q^m_k,c^m_k,$$
where $Q^j_i$ are sequences of states that do not lie on a cycle. Let $\ell(q)$ (\emph{level}) be the smallest integer $i$ such that $\delta(q,a^i)$ is a state on a cycle. To simplify the notation, let $i \oplus j$ be $(i-1+j) \mod k + 1$.

The algorithm starts from the list $(c^1_1,\ldots,c^m_1,\cdots,c^1_k,\ldots,c^m_k)$ of all cycle states; they are considered as visited in the $0$-th step in this order. In the $i$-th step ($i \ge 1$), the algorithm processes the list of visited states from the $(i-1)$-th step in the order in which they were visited. For each state $p$ from the list, the algorithm computes all states $q$ such that $\delta(q,a)=p$ and $q$ is not a cycle state; so it visits precisely all the states $q$ with $\ell(q)=i$. For every visited state $q$, it appends $q$ to the end of the new list of visited states in the current step. For a visited $q$, we have the corresponding cycle state $c^f_g = \delta(q,a^{\ell(q)})$, from which $q$ was reached (possibly indirectly). The algorithm appends $q$ to the beginning of $Q^f_j$ with $j=(nk+g-1-i) \oplus 1$; for example, if $g=1$, then for $i=1,2,3,\ldots,k-1,k,k+1,\ldots$ we have $j=k,k-1,\ldots,2,1,k,\ldots$, respectively.

To illustrate the algorithm, consider the automaton from~Figure~\ref{fig:cyclic_order}. We start from the list $(c_1,c_2,c_3) = (v_2,v_6,v_9)$ of the one cycle, and empty $Q_1,Q_2,Q_3$. In the first step, from state $v_2$ we reach $v_8$, from $v_6$ we do not reach any state, and from $v_9$ we reach $v_5$. Hence, $Q_3=(v_8)$ as $v_2=c_1$, and $Q_2=(v_5)$ as $v_9=c_3$. Then, in the second step, from $v_8$ we reach $v_4$ and $v_3$, and from $v_5$ we do not reach any state; hence, we append $v_4$ and $v_3$ to the beginning of $Q_2$, obtaining $Q_2=(v_3,v_4,v_5)$. In the third step, from $v_4$ we reach $v_1$, so $Q_1$ becomes $(v_1)$. Finally, in the last fourth step, from $v_1$ we reach $v_7$, obtaining $Q_3=(v_7,v_8)$. The final order is so
$$Q_1,c_1,Q_2,c_2,Q_3,c_3 = v_1,v_2,v_3,v_4,v_5,v_6,v_7,v_8,v_9.$$

We can show that the resulted cyclic order is indeed preserved by the action of $a$. Observe that $\delta(c^j_i,a) = c^j_{i \oplus 1}$, and if $q \in Q^j_i$ then $\delta(q,a) \in Q^j_{i \oplus 1}$ or $\delta(q,a)=c^j_{i \oplus 1}$. Hence, the sequence $Q^j_i,c^j_i$ is mapped into $Q^j_{i \oplus 1},c^j_{i \oplus 1}$, and it remains to show that for each $Q^j_i=(p_1,\ldots,p_s)$, the sequence $\delta(p_1,a),\ldots,\delta(p_s,a),\delta(c^j_i,a)$ is a subsequence of $Q^j_{i \oplus 1},c^j_{i \oplus 1}$.
Consider $<$ as the order in these sequences, and let $u,v$ be two states from $Q^j_i \cup \{c^j_i\}$ with $u < v$. If $v=c^j_i$ then we have $\delta(u,a) \le \delta(v,a)=c^j_{i \oplus 1},a$. If $u=p_f$, $v=p_g$ then $u < v$ means that the algorithm appended $u$ after $v$, so $u$ was visited after $v$. They were directly reached from $\delta(u,a)$ and $\delta(v,a)$, respectively. If $\delta(v,a)=c^j_i$ then $\delta(u,a) \le \delta(v,a)$ clearly holds, and if $\delta(u,a)=c^j_i$ then also $\delta(v,a)=c^j_i$. Otherwise, $\delta(u,a),\delta(v,a) \in Q^j_{i \oplus 1}$ and it follows that $\delta(u,a)$ was visited after $\delta(v,a)$ by the algorithm, so $\delta(u,a) > \delta(v,a)$.
As usual breadth-first search, this procedure works in $\O(|Q|)$ time.
\qed
\end{proof}

To show hardness, we reduce the NP-complete problems $\textsc{MONOTONIC}$ and $\textsc{MONOTONIC}_k$ (with $k \ge 2$) to $\textsc{ORIENTED}$ and $\textsc{ORIENTED}_k$, respectively.

\begin{proposition}
Let $\mathcal{A}_{+1}$ be an automaton obtained from $\mathcal{A}=\langle Q,\Sigma,\delta \rangle$ by adding a unique state $s$ with $\delta(s,a)=s$ for every $a \in \Sigma$. Then the following are equivalent:
\begin{itemize}
\item $\mathcal{A}$ is monotonic;
\item $\mathcal{A}_{+1}$ is monotonic;
\item $\mathcal{A}_{+1}$ is oriented.
\end{itemize}
\end{proposition}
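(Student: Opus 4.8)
The plan is to establish the three-way equivalence through a single cycle of implications: $\mathcal{A}$ monotonic $\Rightarrow$ $\mathcal{A}_{+1}$ monotonic $\Rightarrow$ $\mathcal{A}_{+1}$ oriented $\Rightarrow$ $\mathcal{A}$ monotonic. Since every one of the three conditions occurs once as a hypothesis and once as a conclusion in this chain, closing the cycle proves all three equivalent at once, and I need not argue each pair separately.

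For the first implication I would take an underlying linear order $\le$ of $\mathcal{A}$ and extend it to $Q \cup \{s\}$ by declaring $s$ to be the greatest element. Because every state of $Q$ maps under each letter back into $Q$ (the transitions of $\mathcal{A}$ are unchanged in $\mathcal{A}_{+1}$) and $s$ is a fixed point of every letter, order preservation is immediate: for $q \in Q$ we have $\delta(q,a) \in Q$, hence $\delta(q,a) \le s = \delta(s,a)$, while on $Q$ the order is preserved by assumption. The second implication is exactly the remark already recorded in the excerpt, namely that a preserved linear order is in particular a preserved cyclic order, so $\mathcal{A}_{+1}$ monotonic yields $\mathcal{A}_{+1}$ oriented for free.

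The third implication, oriented $\Rightarrow$ monotonic, is where the real work lies, and I expect it to be the main obstacle. Here I would exploit that $s$ is a fixed point of every letter in order to cut the cyclic order open at $s$. Concretely, after a cyclic rotation I may assume the cyclic order of $\mathcal{A}_{+1}$ is $(v_1, \ldots, v_m, s)$ with $Q = \{v_1, \ldots, v_m\}$, and I claim $v_1 < v_2 < \cdots < v_m$ is an underlying linear order of $\mathcal{A}$. Fix a letter $a$. Since no state of $Q$ is sent to $s$, the image sequence $\delta(v_1,a), \ldots, \delta(v_m,a), s$ contains $s$ exactly once, at its very end, so removing adjacent duplicates leaves a sequence $u_1, \ldots, u_r, s$ with each $u_\ell \in Q$. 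This reduced sequence is a subsequence of some cyclic permutation $C = (v_i, \ldots, v_m, s, v_1, \ldots, v_{i-1})$; because $s$ occurs only once in $C$ and is matched last, every $u_\ell$ must be matched strictly before that single occurrence of $s$, i.e.\ among the arc $v_i, \ldots, v_m$ preceding $s$ in $C$. Hence $u_1, \ldots, u_r$ is a subsequence of $v_i, \ldots, v_m$ and is therefore strictly increasing in the linear order.

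Finally, since adjacent-duplicate removal collapses only equal consecutive entries, the original sequence $\delta(v_1,a), \ldots, \delta(v_m,a)$ is non-decreasing in the linear order, which is precisely the statement that $a$ preserves $v_1 < \cdots < v_m$; as $a$ is arbitrary, $\mathcal{A}$ is monotonic. The delicate point to get right is the bookkeeping on the chosen cyclic permutation: one must argue that the unique, terminal occurrence of $s$ forces all the $u_\ell$ into the single arc of $C$ lying before $s$, which is exactly what upgrades the cyclic subsequence condition into genuine linear monotonicity.
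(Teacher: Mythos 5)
Your proof is correct and follows essentially the same route as the paper: the easy implications are handled identically, and the key step for oriented $\Rightarrow$ monotonic is the same idea of rotating the cyclic order so that the sink $s$ comes last and exploiting that $s$ is a fixed point outside the image of $Q$. The only difference is presentational---you argue directly that the cut-open order is preserved, whereas the paper argues by contradiction via a violating triple $(q_f,q_g,s)$; your bookkeeping on the duplicate removal and the subsequence matching is, if anything, slightly more explicit than the paper's.
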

\begin{proof}
Clearly $\mathcal{A}_{+1}$ is monotonic if and only if $\mathcal{A}$ is monotonic, and if $\mathcal{A}_{+1}$ is monotonic then it is also oriented. It remains to show that if $\mathcal{A}_{+1}$ is oriented then $\mathcal{A}_{+1}$ is monotonic.

Assume that $\mathcal{A}_{+1}$ is not monotonic but is oriented, and let $q_1,\ldots,q_n,s$ be a preserved cyclic order of the states of $\mathcal{A}$. Since no state is mapped to $s$, except $s$, and $s$ is mapped to itself under the action of every letter, $q_1,\ldots,q_n$ is a preserved cyclic order of the states of $\mathcal{A}$.
Since $\mathcal{A}$ is not monotonic, $q_1,\ldots,q_n$ is not an underlying linear order of $\mathcal{A}$. So there are two states $q_i,q_j \in Q$ and $a \in \Sigma$, with $i<j$, $\delta(q_i,a) = q_f$, and $\delta(q_j,a) = q_g$, such that $f>g$. It follows that $\delta(q_i,a)=q_f,\delta(q_j,a)=q_g,\delta(s,a)=s$ violates the cyclic order $q_1,\ldots,q_g,\ldots,q_f,\ldots,q_n,s$ of the states of $\mathcal{A}$, since $(q_f,q_g,s)$ is not a subsequence of any cyclic permutation of the cyclic order. Thus $\mathcal{A}_{+1}$ cannot be oriented and not monotonic.
\qed
\end{proof}

\begin{corollary}
The problem of checking whether a given automaton is oriented is NP-complete, even under the restriction to binary alphabets.
\end{corollary}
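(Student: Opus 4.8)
The plan is to assemble the corollary from the immediately preceding Proposition together with the already-established NP-completeness of $\textsc{MONOTONIC}_2$, so that almost no new work is required beyond a membership-in-NP check and a bookkeeping argument about alphabet size.

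First I would verify membership in NP. Analogously to $\textsc{MONOTONIC}$, a certificate for $\textsc{ORIENTED}$ is a candidate cyclic order $q_1,\ldots,q_n$ of the states. To verify it, for each letter $a \in \Sigma$ we compute the sequence $\delta(q_1,a),\ldots,\delta(q_n,a)$, delete adjacent duplicates treating the sequence cyclically (so that the last and first entries are also compared), and test whether the resulting sequence is a subsequence of one of the $n$ cyclic rotations $q_i,\ldots,q_n,q_1,\ldots,q_{i-1}$. Each such test is a single linear scan, so the whole check costs $\O(|\Sigma| \times |Q|^2)$ time; hence $\textsc{ORIENTED}$, and in particular $\textsc{ORIENTED}_2$, is in NP.

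For hardness I would take the map $\mathcal{A} \mapsto \mathcal{A}_{+1}$ of the preceding Proposition as the reduction. It is computable in polynomial time, since it merely appends one sink state $s$ with $\delta(s,a)=s$ for every $a \in \Sigma$, and crucially it does \emph{not} enlarge the alphabet, so a binary $\mathcal{A}$ yields a binary $\mathcal{A}_{+1}$. By the Proposition, $\mathcal{A}$ is monotonic if and only if $\mathcal{A}_{+1}$ is oriented. Thus $\mathcal{A} \mapsto \mathcal{A}_{+1}$ is a polynomial-time many-one reduction from $\textsc{MONOTONIC}_2$ to $\textsc{ORIENTED}_2$, and likewise from $\textsc{MONOTONIC}$ to $\textsc{ORIENTED}$. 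Since $\textsc{MONOTONIC}_2$ is NP-complete by the earlier corollary, it follows that $\textsc{ORIENTED}_2$ is NP-hard, and combined with membership in NP it is NP-complete; the unrestricted $\textsc{ORIENTED}$ is then NP-complete a fortiori.

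The only genuinely delicate point is the NP-verification step: one must confirm that the cyclic-order-preservation condition---``subsequence of a cyclic permutation after removing adjacent duplicates''---can be checked deterministically in polynomial time, rather than taking it for granted as in the linear case. I expect this to be the main place where care is needed, though it is ultimately routine, since there are only $n$ rotations to try and each subsequence test is a single linear scan.
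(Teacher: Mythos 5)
Your proposal is correct and matches the paper's (implicit) argument exactly: the paper likewise obtains hardness by using the $\mathcal{A}\mapsto\mathcal{A}_{+1}$ construction of the preceding Proposition as an alphabet-preserving polynomial-time reduction from $\textsc{MONOTONIC}$ and $\textsc{MONOTONIC}_2$, with NP membership following from a guess-and-check of the cyclic order. Your added care about verifying the cyclic-order condition in polynomial time is a reasonable elaboration of a step the paper leaves implicit.
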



\section{Discussion}\label{sec:discussion}

We have proved that checking whether an automaton is monotonic or oriented is NP-complete.
However, several related problems remain open.
The complexity of determining whether an automaton is \emph{generalized monotonic} \cite{AV2005SynchronizingGeneralizedMonotonicAutomata}, and \emph{weakly monotonic} \cite{Volkov2009ChainOfPartialOrders} is not known. The class of generalized monotonic automata strictly contains the class of monotonic ones, and the class of weakly monotonic automata strictly contains the class of generalized monotonic ones.
Also, it remains open what is the complexity of checking whether an automaton \emph{respects intervals of a directed graph} \cite{GK2013AutomataRespectingIntervals}; this is the widest of the classes containing the classes of generalized and weakly monotonic automata.

It can be observed that if the alphabet is unary then the classes of generalized and weakly monotonic automata are precisely the class of monotonic automata. However, it is not difficult to check that automata $\mathcal{A}_I$ from the construction from Subsection~\ref{subsec:monotonic_npc} are generalized, and so weakly monotonic, regardless of the instance $I$; thus our proof of NP-completeness of testing monotonicity does not work for these wider classes.

On the other hand, for the class of automata preserving a nontrivial \emph{partial order}, the membership problem can be easily solved in polynomial time. An automaton preserves a partial order $\le$, if $p \le q$ implies $\delta(p,a) \le \delta(q,a)$ for every $p,q \in Q$, $a \in \Sigma$. A partial order is nontrivial if at least one pair of states is comparable. In contrast to monotonic automata, not all pairs of states must be comparable, but at least one.
This class contains monotonic, generalized monotonic, and weakly monotonic automata, but not oriented, and is a subclass of automata respecting intervals of a directed graph. From~\cite{GK2013AutomataRespectingIntervals} it follows that if the \v{C}ern\'y conjecture is true for all automata outside this class (admitting only trivial partial orders), then it is true for all automata.

\begin{proposition}\label{pro:partial_order}
Checking whether an automaton preserves a nontrivial partial order and finding it if exists can be done in $\O(|\Sigma| \times |Q|^6)$ time and $\O(|Q|^2)$ working space.
\end{proposition}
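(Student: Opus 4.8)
The plan is to reduce the search to a family of ``forced'' closures, one per candidate comparable pair, and to show that detecting a single valid closure suffices. It is convenient to work with the strict form $<$ of the desired order, which must be irreflexive, transitive, and \emph{preservation-closed}: whenever $x < y$ and $a \in \Sigma$ with $\delta(x,a) \neq \delta(y,a)$, we must have $\delta(x,a) < \delta(y,a)$, while if $\delta(x,a) = \delta(y,a)$ no constraint is imposed. A nontrivial preserved partial order is then exactly an irreflexive, transitive, preservation-closed relation containing at least one pair of distinct states, and from such a relation we recover the partial order by taking its reflexive closure.

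First I would establish the key characterization: the automaton admits a nontrivial preserved partial order if and only if there is a pair of distinct states $(p,q)$ for which the smallest relation $R_{p,q}$ that contains $(p,q)$ and is closed under both preservation and transitivity is irreflexive. The backward direction is immediate, since such an $R_{p,q}$ is by construction preservation-closed and transitive, and irreflexivity together with transitivity yields antisymmetry, so its reflexive closure is a nontrivial preserved partial order. For the forward direction, if the automaton admits a nontrivial preserved partial order, its strict part $P$ is irreflexive, transitive and preservation-closed; choosing a pair $(p,q)$ of distinct states with $(p,q) \in P$, minimality of $R_{p,q}$ gives $R_{p,q} \subseteq P$, hence $R_{p,q}$ is irreflexive. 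Thus it suffices to test each of the $\O(|Q|^2)$ pairs and report the first pair whose closure is irreflexive, returning $R_{p,q}$ together with its reflexive closure as the witnessing order.

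It remains to compute each closure within budget. I would represent $R$ as a boolean $|Q| \times |Q|$ matrix ($\O(|Q|^2)$ space, reused and reset between starting pairs) and drive a worklist of newly added pairs. Processing a pair $(x,y)$ does two things: for preservation it inspects each letter $a$ and, when $\delta(x,a) \neq \delta(y,a)$, enqueues $(\delta(x,a),\delta(y,a))$, costing $\O(|\Sigma|)$; for transitivity it enqueues $(z,y)$ for every $z$ already below $x$ and $(x,w)$ for every $w$ already above $y$, costing $\O(|Q|)$. Since each of the at most $|Q|^2$ pairs enters the worklist once, a single closure costs $\O(|\Sigma|\,|Q|^2 + |Q|^3)$; whenever a pair with equal components is about to be added we immediately declare this starting pair a failure. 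Summing over the $\O(|Q|^2)$ starting pairs yields $\O(|\Sigma|\,|Q|^4 + |Q|^5)$, comfortably within the stated $\O(|\Sigma| \times |Q|^6)$ time and $\O(|Q|^2)$ space bounds.

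The routine parts are the worklist bookkeeping and the inter-pair matrix reset; the part requiring care is the interleaving of the two closure rules, namely verifying that the incremental one-step transitivity propagation (adding only $(z,y)$ and $(x,w)$ rather than the full predecessor--successor product) still reaches the genuine combined fixpoint. This holds because whichever of $(a,b)$, $(b,c)$ is added later will, when processed, find the other already present and thus enqueue $(a,c)$; one must only confirm that the preservation steps interleaved between such additions are never skipped, which is guaranteed since every added pair is processed exactly once against all letters of $\Sigma$.
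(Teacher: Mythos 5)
Your proposal is correct and follows essentially the same approach as the paper: for each of the $\O(|Q|^2)$ candidate pairs $(p,q)$, compute the minimal relation containing $(p,q)$ closed under preservation and transitivity, reject on contradiction, and justify correctness by the minimal closure being contained in the strict part of any preserved order. Your worklist implementation with one-step transitivity propagation even yields a tighter bound of $\O(|\Sigma|\,|Q|^4 + |Q|^5)$ than the paper's $\O(|\Sigma| \times |Q|^6)$, which of course still establishes the stated claim.
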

\begin{proof}
For each pair of distinct states $p,q \in Q$, we try to construct a partial order $<$ with $p < q$. So at the beginning of constructing, all states are incomparable and we order $p < q$.
When ordering a pair $x,y \in Q$ with $x < y$, we take all the consequences $\delta(x,a) < \delta(y,a)$ for every $a \in \Sigma$ with $\delta(x,a) \neq \delta(y,a)$. Of course, this also involves that $x' < y'$ for every $x' < \delta(x,a)$ and $y' > \delta(y,a)$. For each newly ordered pair we repeat the procedure of taking consequences. If a contradiction is found, that is, if we need to order $x < y$ but they have been already ordered so that $x > y$, the construction fails and we start from another pair $p,q$. If for some pair $p,q$ all the consequences are taken without a contradiction, we have found a preserved partial order with $p < q$.

Clearly, if the algorithm finds a partial order, then $x \le y$ implies $\delta(x,a) \le \delta(y,a)$ as it has taken all the consequences, so the order is preserved. Conversely, if there exists a preserved nontrivial partial order $\le$, then $p < q$ for some pair of states, and the consequences cannot lead to a contradiction. Hence, the algorithm will find the minimal partial order with $p < q$ that is preserved and is contained in $\le$.

Concerning the complexity, we need to process $\O(|Q|^2)$ pairs. The constructed partial order can be simply stored as a directed acyclic graph. For every $p,q$, we start from the empty digraph with one edge $(p,q)$. For each ordered pair $\{x,y\}$ we need to take or check $\O(|\Sigma|)$ consequences, and we order $\O(|Q|^2)$ pairs. Taking a consequence and updating the constructed partial order takes $\O(|Q|^2)$ time, due to the possibly quadratic size of $\{z \in Q \mid z < x\} \times \{z \in Q \mid z > y\}$. These together yield in $\O(|\Sigma| \times |Q|^6)$ time, and the need of storing digraphs yields in $\O(|Q|^2)$ space.
\qed
\end{proof}

The algorithm from~Proposition~\ref{pro:partial_order} may be modified for finding an underlying linear order of the given automaton. To do so, after finding a partial order that is not yet linear, we need to order another pair that is not yet comparable, say $\{x,y\}$. Here we must consider both possibilities $x < y$ and $x > y$ to check if one of them finally leads to a linear order. Hence, this results in super-exponential worst case running time.
However, based on some of our experimental evidence, this algorithm is practically much more efficient than the naive checking of all linear orderings:
in most cases of not monotonic automata we can find a contradiction quickly, without the need to enumerate directly all orderings.

\bibliographystyle{plain}

\end{document}